\theoremstyle{plain}
\theoremstyle{plain}
\theoremstyle{plain}
\newtheorem{theorem}{Theorem}
\newtheorem{lemma}{Lemma}
\definecolor{mycolor}{rgb}{0.122, 0.435, 0.698}
\newmdenv[innerlinewidth=0.5pt, roundcorner=4pt, linecolor=mycolor,
innerleftmargin=6pt, innerrightmargin=6pt,
innertopmargin=6pt, innerbottommargin=6pt]{mybox}
\newcommand{\bmh}{\mathbf{h}}  
\newcommand{\bmk}{\mathbf{k}}  
  \newcommand{\bmV}{\mathbf{V}}
\newcommand{\bmv}{\mathbf{v}}  
  \newcommand{\bmW}{\mathbf{W}}
  \newcommand{\bmr}{\mathbf{r}}
\newcommand{\bmw}{\mathbf{w}}
\newcounter{revc}
\newcommand{\revi}[2]{%
	\zref@setcurrent{revsec}{\thesection}%
	\zref@setcurrent{revpage}{\thepage}%
	\zref@setcurrent{revcontent}{#2}%
	\refstepcounter{revc}%
	\label{#1}%
	\zlabel{#1}%
	\textcolor{blue}{#2}%
}
\newcommand{\revinu}[2]{%
	\zref@setcurrent{revsec}{\thesection}%
	\zref@setcurrent{revcontent}{#2}%
	\refstepcounter{revc}%
	\zlabel{#1}%
	\label{#1}%
	#2
}
\newcommand{\revr}[2]{%
	\zref@setcurrent{revsec}{\thesection}%
	\zref@setcurrent{revcontent}{#2}%
	\refstepcounter{revc}%
	\zlabel{#1}%
	\label{#1} \sout{#2}
}
\def\expandafter\quote\expandafter{\quote\onehalfspacing\fontsize{12}{14}\selectfont}
\def\BibTeX{{\rm B\kern-.05em{\sc i\kern-.025em b}\kern-.08em
		T\kern-.1667em\lower.7ex\hbox{E}\kern-.125emX}}
\begin{document}

\title{Minimum Mean Squared Error Holographic Beamforming for Sum-Rate Maximization}  
\author{Chandan~Kumar~Sheemar, Wali Ullah Khan, George C. Alexandropoulos,~\IEEEmembership{Senior Member,~IEEE}, \\Manzoor Ahmed, and Symeon Chatzinotas,~\IEEEmembership{Fellow,~IEEE}   
% <-this % stops a space
 \thanks{C. K. Sheemar, W. U. Khan, and S. Chatzinotas are with the SnT department at the University of Luxembourg (email\{chandankumar.sheemar, symeon.chatzinotas\}@uni.lu). G. C. Alexandropoulos is with the Department of Informatics and Telecommunications, National and Kapodistrian University of Athens, 16122 Athens, Greece (email: alexandg@di.uoa.gr). M. Ahmed, is with the School of Computer and Information Science and also with Institute for AI Industrial Technology Research, Hubei Engineering University, Xiaogan City 432000, China  (e-mails: manzoor.achakzai@gmail.com).} 
 
} 
 \maketitle
\begin{abstract}
 This paper studies the problem of hybrid holographic beamforming for sum-rate maximization in a communication system assisted by a reconfigurable holographic surface. Existing methodologies predominantly rely on gradient-based or approximation techniques necessitating iterative optimization for each update of the holographic response, which imposes substantial computational overhead. To address these limitations, we establish a mathematical relationship between the mean squared error (MSE) criterion and the holographic response of the RHS to enable alternating optimization based on the minimum MSE (MMSE). Our analysis demonstrates that this relationship exhibits a quadratic dependency on each element of the holographic beamformer. Exploiting this property, we derive closed-form optimal expressions for updating the holographic beamforming weights. Our complexity analysis indicates that the proposed approach exhibits only linear complexity in terms of the RHS size, thus, ensuring scalability for large-scale deployments. The presented simulation results validate the effectiveness of our MMSE-based holographic approach, providing useful insights.
\end{abstract}
\begin{IEEEkeywords}
Reconfigurable holographic surfaces, sum-rate maximization, MMSE, hybrid beamforming.
\end{IEEEkeywords}

\IEEEpeerreviewmaketitle

\section{Introduction} \label{Intro}

\IEEEPARstart{R}{econfigurable} holographic surfaces (RHS) have emerged as a key enabler for 6G, offering fine-grained control over electromagnetic waves through densely integrated sub-wavelength elements \cite{huang2020holographic,iacovelli2024holographic}. This feature enables holographic beamforming, which enhances spectral and energy efficiency, mitigates propagation challenges, and improves coverage in complex environments \cite{sheemar2025holographic}. Unlike traditional beamforming, holographic beamforming can be implemented with real-valued weights instead of phase modulation by leveraging the holographic principle \cite{sheemar2025secrecy,deng2021reconfigurable}.

Recent works on holographic beamforming for sum-rate maximization are available in\cite{deng2021reconfigurable,hu2023holographic,he2024linear,holtom2023holographic,suban2024beamforming}. For instance, \cite{deng2021reconfigurable} addresses the sum-rate maximization problem in RHS-assisted systems through a hybrid beamforming approach. In \cite{hu2023holographic}, the authors propose a fully analog holographic beamforming strategy that broadcasts a single data stream per user, aiming to enhance sum-rate performance, while minimizing the expense of additional radio frequency (RF) chains. The work in \cite{he2024linear} introduces an iterative optimization algorithm for sum-rate maximization for LEO satellite broadcasting. In \cite{holtom2023holographic}, distributed coherent mesh relay holographic beamforming based on the minimum-mean squared error (MMSE) criteria is presented. In \cite{suban2024beamforming}, an MMSE-based design for a passive holographic surface-assisted communication system is proposed.   

It is important to emphasize that, while MMSE-based designs have been proposed in \cite{holtom2023holographic, suban2024beamforming}, these approaches are not suitable for scenarios where the antenna front-end is implemented using an RHS. Moreover, existing approaches for sum-rate maximization
can be distinguished into two main categories. For instance,  \cite{he2024linear} employs a gradient ascent approach, whereas \cite{deng2021reconfigurable, hu2023holographic} reformulates the problem using approximations and introduces auxiliary variables, which further increases complexity. In both cases, each update of the holographic beamformer requires an iterative procedure which must be executed until convergence, thus linking the computational complexity not only to the size of the RHS but also to the number of inner iterations required for each update.  

In this letter, we study the problem of hybrid holographic beamforming for an RHS-assisted system. Our objective is to maximize the sum rate by recasting the problem in terms of the MMSE criterion \cite{christensen2008weighted}. This enables the development of the mean squared error (MSE)-based alternating optimization framework for holographic communications. A central result of our approach is the derivation of the direct relationship between the MSE measure and the holographic weights of the RHS. Unlike existing approaches that rely on iterative optimization for each update of the holographic beamformer, the proposed method capitalizes on the inherent quadratic dependence of the MSE on the holographic weights. This analytical insight enables the derivation of a closed-form optimal MMSE solution, effectively eliminating the need for iterative procedures for each update of the holographic beamformer. Our complexity analysis showcases that the proposed framework exhibits only linear complexity in terms of holographic elements, making it desirable for large-scale deployments. Simulation results are presented to validate the performance of the proposed MMSE-based design\footnote{\emph{Notations:} Scalars are denoted by regular letters, while vectors and matrices by bold lowercase and uppercase letters, respectively. $\mathbf{X}^\mathrm{T}$, $\mathbf{X}^\mathrm{H}$, and $\mathbf{X}^{-1}$ denote transposition, Hermitian transposition, and inverse, respectively.}.

\section{System Model and Design Objective}
\begin{figure}
    \centering
    \includegraphics[width=\linewidth,height=3.5cm]{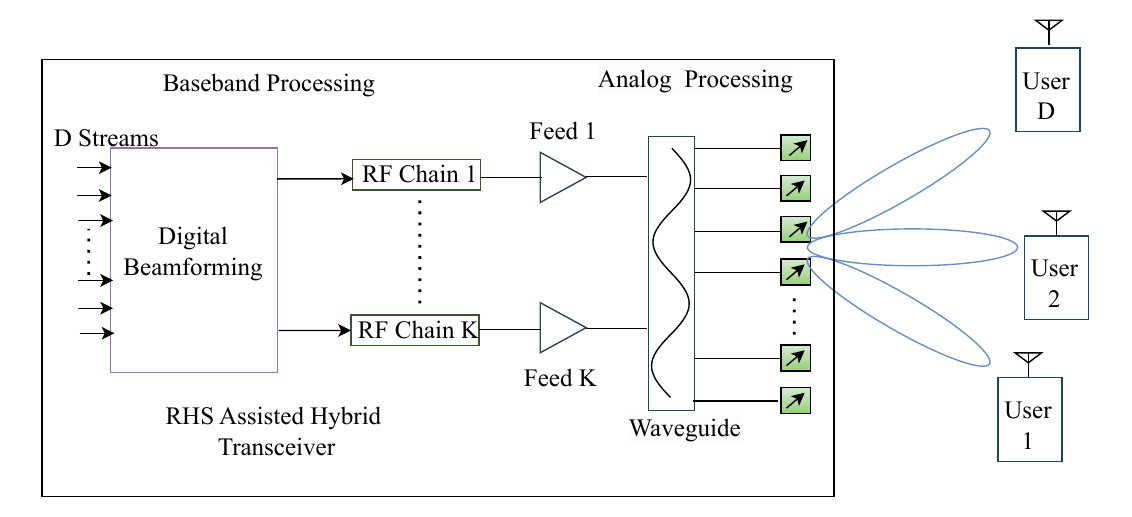}
    \caption{The considered hybrid RHS-assisted holographic multi-user communication system.}
    \label{fig_1}
\end{figure}

We consider a multi-user downlink holographic communication system, where a base station (BS) transmits data streams to $ D $ single-antenna users, represented by the set $ \mathcal{D} = \{1, \dots, D\} $. The BS is equipped with an RHS-based hybrid beamforming architecture, which processes signals in two stages: in baseband, via digital beamformers, and, in the analog domain, with holographic beamforming implemented via an RHS, as shown in Fig.~\ref{fig_1}. The processed signals in baseband are up-converted to the carrier frequency using $R$ RF chains, where $R$ is assumed to be equal to the number of feeds $K$ on the RHS. For the $ D $ active data streams to be transmitted, one for each user, let $ \mathbf{V} \in \mathbb{C}^{R \times D} $ denote the digital beamforming matrix in baseband, where each column $ \mathbf{v}_d \in \mathbb{C}^{R \times 1} $ represents the beamformer for downlink user $ d \in \mathcal{D}$. The number of RF chains in the hybrid RHS architecture is assumed to satisfy $ R=K \geq D $, ensuring that the system can support active data streams. Each RF chain feeds its corresponding RHS feed, with all feeds transforming the high-frequency electrical signal into a propagating electromagnetic reference wave. The radiation amplitude of the reference wave at each metamaterial element of the RHS is dynamically controlled using a holographic beamforming matrix $ \bmW \in \mathbb{C}^{M \times K} $, enabling the generation of desired directional beams, where $M$ denotes the number of elements of the holographic surface. Each element of the holographic matrix $\bmW$ is of the form 
$w_me^{-j \bmk_s \cdot \bmr_m^k}$ $\forall m=1,\ldots,M$ \cite{sheemar2025secrecy}, where $w_m$ denotes the holographic weight for beamforming for each $m$-th element and $e^{-j \bmk_s \cdot \bmr_m^k}$ represents the phase of the reference wave, with $\bmk_s$ denoting the the reference wave vector and $\bmr_m^k$ denoting the distance from the $k$-th RF feed to the $m$-th element of the RHS.

Let $y_d$ denote the received signal in baseband at the DL user $d \in \mathcal{D}$, which can be written as follows:
\begin{equation}\label{eq:received_signal}
\begin{aligned}
    y_d = &  \bmh_d^H \bmW \bmv_d s_d + \sum_{k \neq d}   \bmh_d^H \bmW \bmv_k s_k + n_d,
\end{aligned}
\end{equation}
where $\bmh_d$ denote the channel for user $d \in \mathcal{D}$, $n_d \sim \mathcal{CN}(0,\sigma_d^2)$ denote the additive white Gaussian noise with variance $\sigma_d^2$, and $s_d$ denotes data stream for user $d \in \mathcal{D}$, with $\mathbb{E}=[s_d s_d^*]=1$.
 
\subsection{Problem Formulation}
We aim to maximize the sum rate of the considered RHS-assisted multi-user system, which is expressed as follows:\begin{subequations}\label{problem_statement}
\begin{equation}
\max_{\bmV, \bmW}  \sum_{d \in \mathcal{D}}   \log_2\left( 1 + \frac{\left| \bmh_d^H \bmW \bmv_d \right|^2}{\sum_{k \neq d} \left| \bmh_d^H \bmW \bmv_k \right|^2 + \sigma_d^2} \right)
\end{equation} \label{WSR}
\begin{equation} \label{sum_pow}
\text{s.t.} \quad \mbox{Tr}( \bmW \bmV \bmV^H \bmW ) \leq  \alpha     
\end{equation}
\begin{equation}  \label{Holo_constriant}
 \quad \quad  \quad 0 \leq w_{m} \leq 1, \quad \forall m.
\end{equation}
\end{subequations}
where \eqref{sum_pow} denotes the total sum-power constraint and \eqref{Holo_constriant} denotes the holographic beamforming constraint \cite{sheemar2025secrecy}. It is noteworthy that the problem is inherently non-convex and very challenging to solve. Moreover, its formulation involves complex variables with real-domain constraints on $\bmW$, necessitating novel optimization strategies.

\section{MMSE-Based Reformulation and Optimization}
We leverage the relationship between MMSE and sum rate to reformulate the problem the sum-rate maxmization problem. To this end, we introduce auxiliary variables to enable a structured low-complexity alternating optimization framework for holographic beamforming.

\subsection{Reformulation of \eqref{problem_statement} via MMSE Reception}
We assume that each DL user $d$ deploys the weight $f_d$ in baseband to decode its data stream $s_d$. Therefore, we can write the estimated data stream for this user as follows: 
\begin{equation} \label{detect}
    \hat{s}_d = f_d^* y_d.
\end{equation}
Given \eqref{detect}, let the MSE for each user $d$ be denoted as $\text{MSE}_d =  \mathbb{E}[(\hat{s}_d-s_d) (\hat{s}_d-s_d)^*]$. By substituting \eqref{eq:received_signal} for the received signal model into this expression, it is deduced:
 \begin{equation} \label{ref_mse}
 \begin{aligned}
     \text{MSE}_d = & |f_d|^2 \bmh_d^H \bmW \bmv_d \bmv_d^H \bmW^H \bmh_d 
 + |f_d|^2 \sigma_d^2 
+ 1 
\\& + |f_d|^2 \sum_{k \neq d} \bmh_d^H \bmW \bmv_k \bmv_k^H \bmW^H \bmh_d - \hspace{-1mm}2 \text{Re}\left(f_d^* \bmh_d^H \bmW \bmv_d\right).
 \end{aligned}
\end{equation}
Hence, the optimal MMSE combiner per user $d$ is given as:
\begin{equation} \label{combiner_compt}
	\begin{aligned}
    f_d^* = \hspace{-1mm} &(
    \bmh_d^H \bmW \bmv_d \bmv_d^H \bmW^H \bmh_d  \hspace{-0.1mm}
    + \hspace{-0.5mm}  \sum\limits_{k \neq d}  
    \bmh_d^H \bmW \bmv_k \bmv_k^H \bmW^H \bmh_d
    + \sigma_d^2 )^{-1} \\& \times \bmh_d^H \bmW \bmv_d.
     \end{aligned}
\end{equation}
Given $f_d$ $\forall d$ and using the fundamental relationship between the sum rate and MSE~\cite{christensen2008weighted}, we obtain the following equivalent reformulation of \eqref{problem_statement}: 
%\begin{subequations}\label{MMSE_problem}
%\begin{equation}
%\min_{\bmV, \bmW, \textbf{m}}  \sum_{d \in \mathcal{D}}   m_d \text{MSE}_d, 
%\end{equation}
%\begin{equation} \label{sum_pow_MMSE}
%\text{s.t.} \quad \eqref{sum_pow} \;\& %\;\eqref{Holo_constriant}.
%\end{equation}
%\end{subequations}

\begin{equation}\label{MMSE_problem}
\min_{\bmV, \bmW, \textbf{m}}  \sum_{d \in \mathcal{D}}   m_d \text{MSE}_d \quad \text{s.t.} \quad \eqref{sum_pow} \;\text{and}\;\eqref{Holo_constriant}.
\end{equation}
where $\textbf{m} = [m_1,...,m_D]$ denotes the set of MSE weights, with $m_d >0$ $\forall d$.
By analyzing the gradients of \eqref{problem_statement} and \eqref{MMSE_problem}, it can be easily shown that the KKT condition for the sum rate and MMSE are equivalent if the weights $m_d$ are chosen as follows~\cite[eq. (24)]{cirik2015weighted}:
\begin{equation} \label{weight_calc}
    m_d = \frac{1}{\text{ln}(2)} \text{MSE}_d^{-1}.
\end{equation}

\subsection{Digital Beamforming Design}
Given the other variables fixed, \eqref{MMSE_problem} can be expressed withe respect to only the digital beamformer at the BS as:
%\begin{subequations}\label{problem_MSE_statement}
%\begin{equation}
%\min_{\bmV}   \sum_{d \in \mathcal{D}}  m_d \text{MSE}_d
%\end{equation}  
%\begin{equation}  
%\text{s.t.}  \quad \eqref{sum_pow}.
%\end{equation}
%\end{subequations}

\begin{equation}\label{problem_MSE_statement}
\min_{\bmV}   \sum_{d \in \mathcal{D}}  m_d \text{MSE}_d
\quad
\text{s.t.}  \quad \eqref{sum_pow}.
\end{equation}
To solve this problem, we calculate the partial derivative of the Lagrangian function of the objective function with respect to the conjugate of each $\bmv_d$. This yields the following optimal unconstrained digital beamformer:
\begin{equation}\label{digital_calc}
\begin{aligned}
\bmv_d^* =& ( m_d |f_d|^2 \bmW^H \bmh_d \bmh_d^H \bmW   + \sum_{k \neq d} \hspace{-1mm}m_k |f_k|^2 \bmW^H \bmh_k \bmh_k^H \bmW )^{-1} \\& \times m_d f_d^* \bmW^H \bmh_d.
\end{aligned}
\end{equation}
To meet the sum power constraint with equality, we normalize it to unit-norm and scale  it as $
\bmv_d = \sqrt{\frac{\alpha}{\text{Tr}(\bmW \bmV \bmV^H \bmW^H)}} \bmv_d^*$ $\forall d$.

\subsection{Holographic Beamforming Design}
In this section, we optimize the holographic beamformer assuming the other variables are fixed. The optimization problem with respect to $\bmW$ is written as follows: 
%\begin{subequations}\label{problem__holo_restatement}
%\begin{equation}
%\min_{\bmW}   \sum_{d \in \mathcal{D}}  m_d \text{MSE}_d
%\end{equation} 
%\begin{equation}  
% \quad \quad  \text{s.t.}   \quad \eqref{Holo_constriant}.
%\end{equation}
%\end{subequations}
\begin{equation}\label{problem__holo_restatement}
\min_{\bmW}   \sum_{d \in \mathcal{D}}  m_d \text{MSE}_d
 \quad  \text{s.t.}   \quad \eqref{Holo_constriant}.
\end{equation}
The optimization of the holographic beamformer presents significant complexity, arising from the real-domain constraints and the intricate coupling among its various terms. To streamline the objective function, our initial approach focuses on elucidating the direct relationship between the MSE (and thus the sum rate) and the response of each reconfigurable element of the holographic beamformer. 
 
Note that we can decompose the holographic beamformer into two parts as $ \bmW = \text{diag}(\bmw) \mathbf{\Phi} $ \cite{sheemar2025secrecy,deng2021reconfigurable}, where $\bmw = [w_1, w_2, \dots, w_M]$ and $\mathbf{\Phi} $ contains fixed phase components, which describe the reference wave propagation within the waveguide. Given this observation, the subsequent result formally establishes our MSE relationship.

\begin{theorem}
In our hybrid RHS-assisted holographic multiuser
communication system, the sum-MSE corresponding to each $m$-th element of the holographic beamformer is given as
\begin{equation} \label{MSE_dependence}
    \begin{aligned}
    \text{MSE}(w_m) =&\sum_{d \in \mathcal{D}} m_d \bigg[  1 - \bigg(2 \text{Re}\bigg(f_d  {h}_{m,n}^{d*} w_m  \sum_{k=1}^K \mathbf{\Phi}_{m,k} v_d^k\bigg)\bigg) \\
&\quad + |f_d|^2 |{h}_{m,n}^{d*}|^2 |w_m|^2 \bigg|\sum_{k=1}^K \mathbf{\Phi}_{m,k} v_d^k\bigg|^2 \\
&\quad + |f_d|^2 |{h}_{m,n}^{d*}|^2 |w_m|^2 \sum_{l \neq d} \bigg|\sum_{k=1}^K \mathbf{\Phi}_{m,k} v_l^k\bigg|^2 \\
&\quad + |f_d|^2 \sigma_d^2 + c
\bigg].
    \end{aligned}
\end{equation} 
\end{theorem}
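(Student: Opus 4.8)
The plan is to start from the per-user MSE expression \eqref{ref_mse} and substitute the decomposition $\bmW = \text{diag}(\bmw)\mathbf{\Phi}$, then isolate the contribution of a single element index $m$. First I would write each term of \eqref{ref_mse} in scalar/index form. The key observation is that $\bmh_d^H \bmW \bmv_k = \sum_{m=1}^M h_{m}^{d*} w_m \sum_{k'=1}^K \mathbf{\Phi}_{m,k'} v_k^{k'}$, so each quadratic-in-$\bmW$ term like $\bmh_d^H \bmW \bmv_k \bmv_k^H \bmW^H \bmh_d$ expands into a double sum over element indices $m, m'$. To expose the dependence on a fixed $w_m$, I would separate the $m'=m$ diagonal contribution (which is genuinely quadratic in $w_m$, giving the $|w_m|^2$ terms) from the cross terms $m'\neq m$ and the terms with no $m$ at all, collecting all of the latter into the constant $c$. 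Similarly, the linear term $-2\text{Re}(f_d^* \bmh_d^H \bmW \bmv_d)$ contributes the $m$-indexed summand $-2\text{Re}(f_d^* h_{m}^{d*} w_m \sum_{k} \mathbf{\Phi}_{m,k} v_d^k)$ plus terms independent of $w_m$ that again fold into $c$.

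Next I would assemble these pieces: the $+1$ and $|f_d|^2\sigma_d^2$ terms from \eqref{ref_mse} pass through directly (the $\sigma_d^2$ term is written separately and the $1$ absorbed per the bracket structure of \eqref{MSE_dependence}); the linear term gives the $2\text{Re}(\cdot)$ contribution; the desired-signal quadratic term $|f_d|^2 \bmh_d^H\bmW\bmv_d\bmv_d^H\bmW^H\bmh_d$ contributes its $m$-diagonal part $|f_d|^2 |h_m^{d*}|^2 |w_m|^2 |\sum_k \mathbf{\Phi}_{m,k} v_d^k|^2$; and the interference sum $|f_d|^2\sum_{l\neq d}\bmh_d^H\bmW\bmv_l\bmv_l^H\bmW^H\bmh_d$ contributes $|f_d|^2 |h_m^{d*}|^2 |w_m|^2 \sum_{l\neq d} |\sum_k \mathbf{\Phi}_{m,k} v_l^k|^2$. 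Multiplying each per-user MSE by $m_d$ and summing over $d \in \mathcal{D}$ then yields exactly \eqref{MSE_dependence}, with $c$ being the aggregate of all cross terms ($m'\neq m$) and $m$-independent residuals; I would note that $c$ does not depend on $w_m$, which is the only property needed for the subsequent closed-form minimization.

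The main obstacle is bookkeeping rather than conceptual: carefully tracking which parts of the double sums over $(m,m')$ belong to the diagonal versus the off-diagonal, and verifying that the cross terms (which in principle are \emph{linear} in $w_m$ with coefficients depending on the other $w_{m'}$) are correctly attributed. Strictly speaking, $c$ as written hides a residual term linear in $w_m$ coming from the $m'\neq m$ cross terms in the quadratic forms and is therefore a slight abuse — but since the theorem statement lumps everything not explicitly shown into $c$ and the optimization that follows treats $w_m$ one coordinate at a time with all other $w_{m'}$ fixed, this is harmless; I would simply remark that $c$ collects all terms not explicitly exhibited, some of which may retain a benign linear-in-$w_m$ piece that is constant during the per-element update. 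A secondary point to state clearly is the notational convention that $h_{m,n}^d$ denotes the relevant scalar channel entry associated with element $m$ (the index $n$ being vestigial in this single-antenna-user setting), so that $|h_{m,n}^{d*}|^2 = |h_m^d|^2$; once that is pinned down, the derivation is a direct expansion.
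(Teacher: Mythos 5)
Your proposal is correct and follows essentially the same route as the paper's Appendix~\ref{Appendix_1}: substitute $\bmW=\operatorname{diag}(\bmw)\mathbf{\Phi}$ into the per-user MSE \eqref{ref_mse}, isolate the summands carrying the fixed index $m$ (the linear term and the diagonal $m'=m$ parts of the two quadratic forms), fold everything else into $c$, and then weight by $m_d$ and sum over $d$. Your caveat about the $m'\neq m$ cross terms is well taken — the paper's own appendix silently absorbs them too — though be aware that such a cross term is \emph{linear} in $w_m$ and therefore not literally constant during the per-element update; strictly it belongs to the linear coefficient $b_m$ of the quadratic used in the subsequent lemma rather than to $c$, a looseness you inherit from the paper rather than introduce.
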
 
\begin{proof}
    The proof is provided in Appendix~\ref{Appendix_1}.
\end{proof}
%\begin{figure*}
 %\begin{equation} \label{opt_value}
%w_m = 
%\frac{
%\sum_{d \in \mathcal{D}} m_d \cdot \text{Re}\bigg(f_d \cdot {h}_{d,m}^{*} \cdot \sum_{k=1}^K \mathbf{\Phi}_{m,k} v_d^k\bigg)
%}{
%\sum_{d \in \mathcal{D}} m_d \cdot |f_d|^2 \cdot \bigg[
%|{h}_{d,m}^{*}|^2 \cdot \bigg|\sum_{k=1}^K \mathbf{\Phi}_{m,k} v_d^k\bigg|^2
%+ |{h}_{d,m}^{*}|^2 \cdot \sum_{l \neq d} \bigg|\sum_{k=1}^K \mathbf{\Phi}_{m,k} v_l^k \bigg|^2
%\bigg]}.
%\end{equation}\hrulefill
% \end{figure*}
Based on the results above, the optimization problem can be reformulated in terms of each element $w_m$ of the holographic beamformer as follows:
%\begin{subequations}\label{problem__holo_restatement}
%\begin{equation}
%\min_{w_m} \sum_{d \in \mathcal{D}}  m_d \text{MSE}_d
%\end{equation} 
%\begin{equation}  
% \text{s.t.}  \quad  \eqref{Holo_constriant}.
%\end{equation}
%\end{subequations}
\begin{equation}\label{problem__holo_restatement}
\min_{w_m} \sum_{d \in \mathcal{D}}  m_d \text{MSE}_d
\quad 
 \text{s.t.}  \quad  \eqref{Holo_constriant}.
\end{equation}
In the following, we initially disregard the constraint  \eqref{Holo_constriant}  and derive a fundamental result for the optimal unconstrained updates without requiring iterations.

\begin{lemma}
   The optimal unconstrained closed-form update for each $w_m$ is given as 
   \begin{equation} \label{opt_value}
w_m = 
\frac{\sum\limits_{d \in \mathcal{D}} m_d \text{Re} \bigg(f_d {h}_{d,m}^{*} 
\sum\limits_{k=1}^K \mathbf{\Phi}_{m,k} v_d^k \bigg)}
{\sum\limits_{d \in \mathcal{D}} m_d |f_d|^2 \bigg[
|{h}_{d,m}^{*}|^2 \bigg|\sum\limits_{k=1}^K \mathbf{\Phi}_{m,k} v_d^k\bigg|^2 
+ d \bigg]}
\end{equation}
where $d =|{h}_{d,m}^{*}|^2 \sum\limits_{l \neq d} \bigg|\sum\limits_{k=1}^K 
\mathbf{\Phi}_{m,k} v_l^k \bigg|^2$.
\end{lemma}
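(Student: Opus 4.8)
The plan is to minimize the sum-MSE directly over the single real scalar $w_m$, exploiting the quadratic structure already exposed in \eqref{MSE_dependence}. Since $w_m$ is the (real-valued) radiation amplitude of the $m$-th holographic element, relaxing the magnitude bound \eqref{Holo_constriant} leaves an unconstrained problem over $w_m \in \mathbb{R}$; consequently $|w_m|^2 = w_m^2$ and $2\,\text{Re}\big(f_d h_{d,m}^{*} w_m \sum_{k=1}^{K} \mathbf{\Phi}_{m,k} v_d^k\big) = 2 w_m \,\text{Re}\big(f_d h_{d,m}^{*} \sum_{k=1}^{K}\mathbf{\Phi}_{m,k} v_d^k\big)$, so the objective of the holographic subproblem, viewed as a function of $w_m$ alone, becomes a one-dimensional real quadratic.

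First I would discard every term in \eqref{MSE_dependence} that is independent of $w_m$ — the leading $1$, the noise terms $|f_d|^2 \sigma_d^2$, and the aggregate constant $c$ — since they do not change the argmin. Collecting what remains yields $g(w_m) = a\, w_m^2 - 2 b\, w_m + \text{const}$, where the linear coefficient is $b = \sum_{d \in \mathcal{D}} m_d \,\text{Re}\big(f_d h_{d,m}^{*}\sum_{k=1}^K \mathbf{\Phi}_{m,k} v_d^k\big)$ and the quadratic coefficient aggregates the desired-signal and interference contributions, $a = \sum_{d \in \mathcal{D}} m_d |f_d|^2 |h_{d,m}^{*}|^2\big( \big|\sum_{k=1}^K \mathbf{\Phi}_{m,k} v_d^k\big|^2 + \sum_{l\neq d}\big|\sum_{k=1}^K \mathbf{\Phi}_{m,k} v_l^k\big|^2\big)$.

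Next I would argue that $a \geq 0$ as a sum of nonnegative quantities, and $a > 0$ under the mild non-degeneracy condition that the products $m_d |f_d|^2 |h_{d,m}^{*}|^2$ and $\big|\sum_{k} \mathbf{\Phi}_{m,k} v_d^k\big|^2$ are not simultaneously zero for all $d$; hence $g$ is a strictly convex parabola whose unique global minimizer is the stationary point of $g'(w_m) = 2 a w_m - 2 b = 0$, namely $w_m = b/a$. Substituting the expressions for $a$ and $b$, and introducing the per-user interference abbreviation $d = |h_{d,m}^{*}|^2 \sum_{l \neq d}\big|\sum_{k=1}^K \mathbf{\Phi}_{m,k} v_l^k\big|^2$ inside the denominator sum, reproduces \eqref{opt_value}.

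The derivation is short, and only two points need care: (i) the reduction from the complex expression in \eqref{MSE_dependence} to a real scalar minimization, which follows because $w_m$ is physically a real amplitude and the bound \eqref{Holo_constriant} has been relaxed; and (ii) verifying that the coefficient of $w_m^2$ is strictly positive, so that the stationary point is a genuine minimizer of the sum-MSE rather than a maximizer or inflection point — this sign/non-degeneracy check is the step I would regard as the main, though modest, obstacle, after which the box constraint \eqref{Holo_constriant} can be reinstated by simply clipping $w_m$ to $[0,1]$.
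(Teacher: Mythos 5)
Your proposal matches the paper's own argument: both exploit the quadratic dependence of the sum-MSE on $w_m$ established in Theorem~1, write the objective as $a\,w_m^2 - 2b\,w_m + \text{const}$, and obtain $w_m = b/a$ from the first-order condition, which reproduces \eqref{opt_value}. The only difference is that you spell out the coefficients $a$ and $b$ explicitly and justify $a>0$ via a non-degeneracy condition, whereas the paper simply asserts positivity; this is a welcome but minor refinement, not a different route.
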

\begin{proof}  
To prove this result, let $f(\cdot)$ denote our objective function. Note that, as the MSE function exhibits a quadratic dependence on \( w_m \), it can be expressed in the standard form  
$  f(w_m) = a_m w_m^2 - 2 b_m w_m + c_m,  
$
where \( a_m > 0 \) ensures convexity, which is our case. To find the optimal value of \( w_m \), we take the first derivative and set it to zero $
   \partial f(w_m)/\partial w_m = 2 a_m w_m - 2 b_m = 0. $
Solving for \( w_m \), we obtain the optimal closed-form solution $ w_m^* = b_m/a_m.$ 
Since \( f(w_m) \) is convex, this solution is guaranteed to be a global minimum. 
\end{proof}
To satisfy the constraint  \( 0 \leq w_m \leq 1 \), we project its value on the feasible set as follows:
\begin{enumerate}
    \item If the unconstrained solution satisfies \( 0 \leq w_m \leq 1 \), then no adjustment is needed.
    \item If the unconstrained solution is negative (\( w_m < 0 \)), it is projected to the lower bound \( w_m = 0 \).
    \item If the unconstrained solution exceeds the upper bound (\( w_m > 1 \)), it is projected to the upper bound \( w_m = 1 \).
\end{enumerate}
  
Based on the aforementioned projection, we can derive the optimal closed-form constrained updates for $w_m$ as follows:
\begin{equation}
w_m = \min\Big( \max\Big( w_m, 0  \Big),\ 1 \Big).
\end{equation}

\emph{Remark:}  
The closed-form updates for $\bmW$, derived from its quadratic dependence on the MSE, eliminate the necessity for iterative optimization until convergence for each update of \( \bmW \), thereby significantly reducing computational complexity.  In contrast, iterative approaches may require fine-tuning of multiple parameters \cite{deng2021reconfigurable,he2024linear, hu2023holographic}, and improper tuning may lead to convergence at a sub-optimal solution, whereas the proposed method directly guarantees an optimal update without such dependencies.

\subsection{On the Convergence}
The proposed MMSE-based alternating optimization algorithm iteratively updates the digital beamforming matrix \( \bmV \), the holographic beamforming matrix \( \bmW \), and the MSE weights \( \textbf{m} \). Each step ensures a monotonic decrease in the objective function. The digital beamforming update in \eqref{digital_calc} minimizes the sum-MSE for a fixed \( \bmW \), while the holographic beamforming update in \eqref{opt_value}, derived from the quadratic dependence of the MSE on \( w_m \), guarantees an optimal closed-form solution. Additionally, the MSE weight update in \eqref{weight_calc} maintains the equivalence between MMSE and um rate, preserving alignment with the original objective function. Since the sum-MSE function is non-negative and lower-bounded, the optimization sequence cannot decrease indefinitely. Furthermore, our overall approach, summarized in Algorithm~\ref{algmmse_beamforming}, follows a block coordinate descent framework, where each subproblem is solved optimally in closed form, ensuring that updates move toward a stationary point. Therefore, the proposed algorithm is guaranteed to converge to, at least, a locally optimal solution. The convergence behaviour of the proposed method is shown in Fig.~\ref{conv}.

 \begin{algorithm}[t]
\caption{Proposed Hybrid Holographic Beamforming}
\label{algmmse_beamforming}
\begin{algorithmic}[1]
\State \textbf{Input} Initial $\mathbf{V}$, $\bmW$, and set tolerances $\epsilon$ and $T_{max}$.
\Repeat
    \State Update the MMSE reception weights via \eqref{combiner_compt}.
    \State Update the MSE weights with \eqref{weight_calc}.
    \State  Update the MMSE digital beamformers with \eqref{digital_calc}.
    \State  Update the MMSE holographic beamformer via \eqref{opt_value}.
    \State  Increment iteration counter $t \gets t + 1$.
\Until{Convergence or $t \geq T_{max}$}.
\State  \textbf{Output} Optimized $\mathbf{V}$ and $\mathbf{W}$.
\end{algorithmic}
\end{algorithm}

\subsection{Complexity Analysis}
The per-iteration computational complexity of the proposed algorithm is primarily determined by the two stages of digital and holographic beamforming. The digital beamformer optimization requires inverting a \(K \times K\) matrix per user, resulting in an overall complexity of \(\mathcal{O}(D K^3)\), while the scaling factor to meet the power constriant adds \(\mathcal{O}(D K)\), which is negligible. For holographic beamforming, computing the closed-form solution for \(w_m\) via \eqref{opt_value} across \(M\) elements involves summations over \(D\) users and \(K\) feeds, leading to \(\mathcal{O}(M D K)\). Therefore, the overall complexity is \(\mathcal{O}(D K^3 + M D K)\), which in only linear in $M$.

 The only work addressing the considered scenario is \cite{deng2021reconfigurable}, which lacks complexity analysis. However, therein, due to the iterative updates of \(\bmW\) and the introduction of auxiliary variables, the complexity is expected to be at least \(\mathcal{O}(I (M D K + C))\), where \(I \gg 1\) represents the required iterations per update of \(\bmW\), and \(C\) accounts for the number of the auxiliary variables. The only algorithm with the reported linear complexity is \cite{he2024linear}, but it still depends on the number of iterations and assumes a fully analog system with no digital beamforming. In hybrid beamforming, gradient updates of the holographic beamformer are required until convergence for each update of the digital beamformers. Moreover, the gradient computation becomes highly complex (see, e.g., \cite[eq. (21)]{sheemar2025joint_QoS}), increasing the overall complexity to \(\mathcal{O}(I M^2)\).

\section{Numerical Results and Discussion} 
In this section, we evaluate the performance of the proposed method. We consider a system with \(D = 3\) users and \(K = 6\) RF chains. The signal-to-noise ratio (SNR) is defined as the transmit SNR, with the transmit power normalized to $1$, while the noise variance, \(\sigma_d^2\), is adjusted to achieve the desired SNR. The carrier frequency is set to 30 GHz and the element spacing is assumed to be \(\lambda/4\).  In free space, the propagation vector is represented as \(\mathbf{k}_f\), while on the RHS, it is denoted by \(\mathbf{k}_s\). According to electromagnetic theory, the magnitude of \(\mathbf{k}_s\) is related to \(\mathbf{k}_f\) by  
$|\mathbf{k}_s| = \sqrt{\epsilon_r} |\mathbf{k}_f|$,
where \(\epsilon_r\) represents the relative permittivity of the RHS substrate, which is typically around 3. We adopt standard values from the literature, setting \(|\mathbf{k}_s| = 200 \sqrt{3} \pi\) and \(|\mathbf{k}_f| = 200 \pi\). Since the system operates at 30 GHz, we model the channel $\bmh_d$ as a pathwise channel with \(I = 5\) paths, and the RHS response as a uniform planar array (UPA). To evaluate the performance of the proposed method, we considered two benchmark schemes: 1) The method proposed in \cite{deng2021reconfigurable} (Benchmark 1), and 2) Random selection of the $\bmW$ values (Benchmark 2).
 
\begin{figure}
    \centering
    \includegraphics[width=0.6\linewidth]{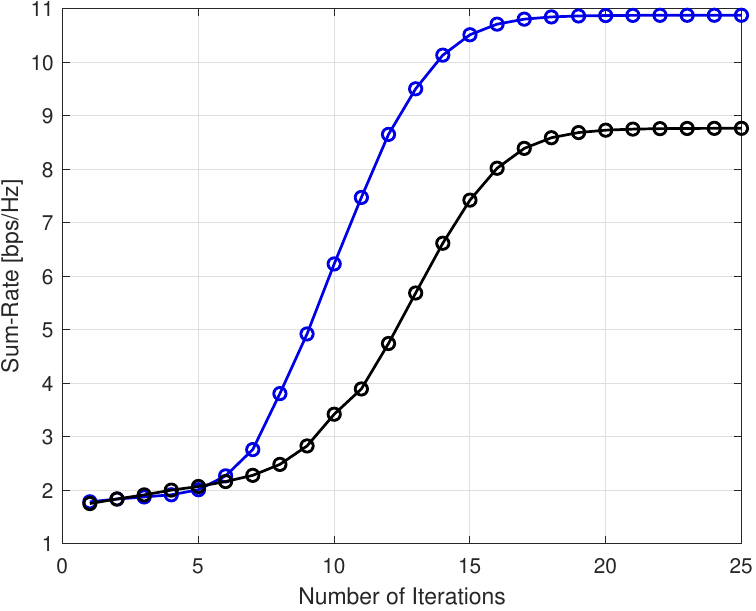}
    \caption{Convergence of the proposed MMSE-based hybrid holographic beamforming scheme for SNR= 0 dB.}
    \label{conv}
\end{figure}

 \begin{figure}
      \centering
        \includegraphics[width=0.6\linewidth]{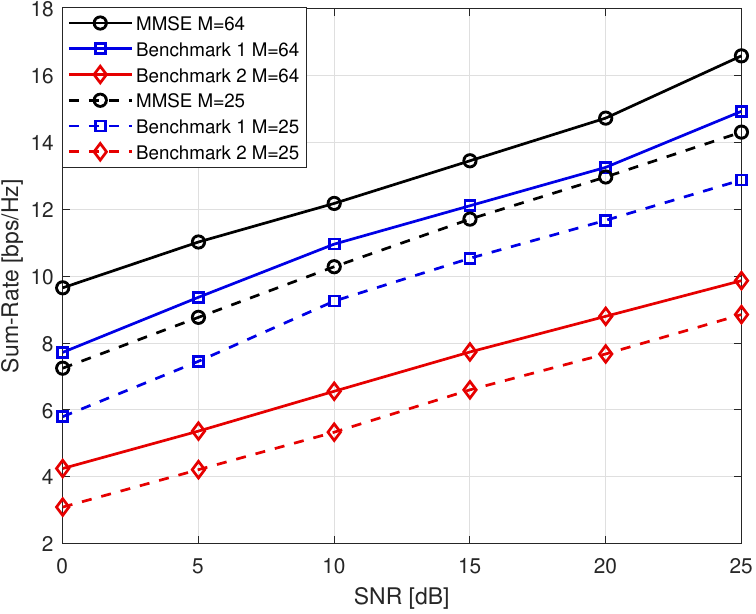}
        \caption{Achievable sum rate versus SNR for $M=\{25,64\}$.}
        \label{figwmmse_snr}
 \end{figure}

%\begin{figure}[t]
 %   \centering
  %  \begin{subfigure}{0.49\linewidth}
    %   
   % \end{subfigure}
   % \hfill
    %\begin{subfigure}{0.49\linewidth}
     %   
   % \end{subfigure}
   % \caption{Comparison of achievable sum-rate with respect to SNR and M.}
  %  \label{figwmmse_comparison}
%\end{figure}

In Fig.~\ref{figwmmse_snr}, the performance of the proposed method is analyzed as a function of the SNR for RHS sizes \(M = \{25,64\}\). The results indicate that, in both cases, the proposed method consistently outperforms the benchmark schemes, achieving a higher sum-rate performance across the entire SNR range. This superior performance is attributed to the robustness of our optimization approach, which enables optimal closed-form updates, eliminating the need for extensive fine-tuning of multiple parameters to converge to a local optimum. The impact of increasing \(M\) is also evident, as a larger RHS size leads to significant performance gains across all methods, primarily due to enhanced holographic beamforming gain and improved interference suppression. Furthermore, the proposed framework not only outperforms both benchmark schemes, but also maintains high computational efficiency, as its complexity scales linearly with the RHS size, making it well-suited for large-scale implementations.
\begin{figure}
    \centering
        \includegraphics[width=0.6\linewidth]{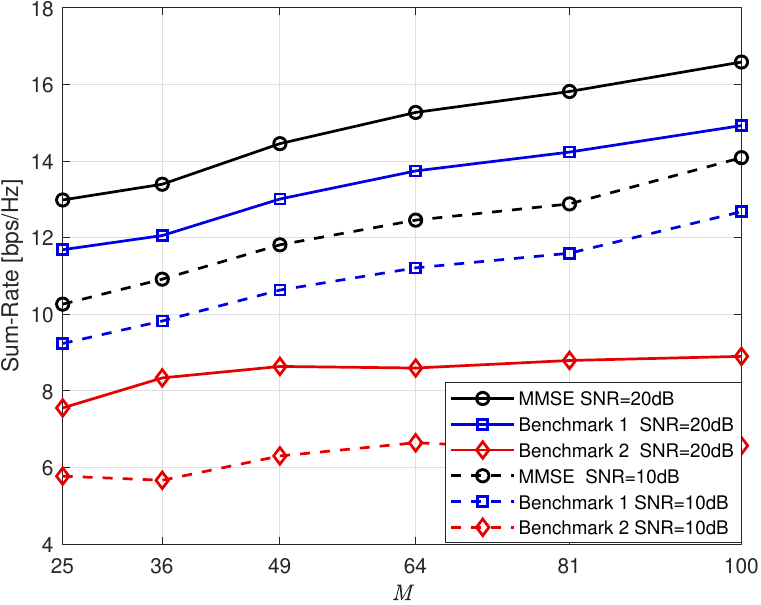}
        \caption{Achievable sum rate versus $M$ for SNR=$\{10,20\}$dB.}
        \label{figwmmse_m}
\end{figure}

In Fig.~\ref{figwmmse_m}, the performance as a function of the RHS size $M$ is shown at the SNR levels $10$ dB and $20$ dB. The results demonstrate a clear performance improvement as $M$ increases for all the schemes, highlighting the benefits of a larger RHS in jointly enhancing beamforming gain and interference suppression. At both SNR levels, it is visible that the proposed method consistently outperforms the benchmark schemes, demonstrating its effectiveness in maximizing the system's sum-rate performance.

\section{Conclusion}
In this letter, the problem of hybrid holographic beamforming for sum-rate maximization in RHS-assisted multi-user communication systems was investigated. By establishing a mathematical link between the MMSE criterion and the sum rate, a closed-form optimal solution was derived, eliminating the need for iterative optimization. It was demonstrated that the proposed approach achieves linear complexity with respect to the RHS size, ensuring scalability for large-scale deployments. Our simulation results validated our approach's superior performance and efficiency over benchmark schemes.
 
\begin{appendices}
 \section{MSE Derivation} \label{Appendix_1}
To establish the relationship between the MSE and each $w_m$ comprising $\bmw$, we consider highlighting their dependence. Recall that the $ \text{MSE}_d$ is given as follows:
\begin{align*}
\text{MSE}_d &= 1 \underbrace{- 2\text{Re}\big(f_d \bmh_d^H  \bmW \bmv_d\big)}_{=x_2}
+ \underbrace{|f_d|^2 |\bmh_d^H \bmW \bmv_d|^2}_{=x_3}
\\&\quad + \underbrace{\sum_{l \neq d} |f_d|^2 |\bmh_d^H \bmW \bmv_l|^2}_{=x_4} + |f_d|^2 \sigma_d^2.
\end{align*}
The \(m\)-th element of \(\bmw\) affects \(\text{MSE}_d\) through its contribution to terms involving \(\bmW \bmv_d\) and \(\bmW \bmv_l\) for \(l \neq d\). The real part of the desired signal term is given in $x_2$.
By expanding the contribution of \(w_m\), each \(m\)-th element of \(\bmW \bmv_d\) can be written as $
(\bmW \bmv_d)_m = w_m \sum_{k=1}^K \mathbf{\Phi}_{m,k} v_d^k$,
where \(v_d^k\) is the \(k\)-th element of the beamformer vector \(\bmv_d\) and $\mathbf{\Phi}_{m,k}$ denotes the $k$-th element in the $m$-th row. Substituting this into \(x_2\), yields
\begin{equation}
x_2 = -2 \text{Re}\bigg(f_d {h}_{d,m}^{*} \cdot w_m \sum_{k=1}^K \mathbf{\Phi}_{m,k} v_d^k \bigg) + c_2,
\end{equation}
where \({h}_{d,m}^{*}\) is the conjugate of the channel between the \(m\)-th element of the holographic surface and the downlink user \(d\), and $c_2$ contains constant terms not depending on $w_m$.

The third term $x_3$ denotes the norm-squared of the desired signal term.
Expanding the inner product leads to 
$ 
|\bmh_d^H \bmW \bmv_d|^2 = \big(\bmh_d^H \bmW \bmv_d\big)^H \big(\bmh_d^H \bmW \bmv_d\big)$, 
and by noting that $
\bmh_d^H \bmW \bmv_d = \sum_{m} {h}_{d,m}^{*} (\bmW \bmv_d)_m$
with \((\bmW \bmv_d)_m = w_m \sum_{k=1}^K \mathbf{\Phi}_{m,k} v_d^k\) indicating the terms involving only $w_m$, we substitute
$ \bmh_d^H \bmW \bmv_d = {h}_{d,m}^{*} w_m \sum_{k=1}^K \mathbf{\Phi}_{m,k} v_d^k+ c_3,
$
where $c_3$ denotes constant terms which do not depend on $w_m$. Taking the norm-squared and multiplying with \(|f_d|^2\), the following simplication is obtained:
\begin{equation}
x_3 = |f_d|^2 |{h}_{d,m}^{*}|^2 |w_m|^2 \bigg|\sum_{k=1}^K \mathbf{\Phi}_{m,k} v_d^k\bigg|^2.
\end{equation}

The last term $x_4$ can be decomposed similarly, leading to: 
\begin{equation}
x_4 = |f_d|^2 |{h}_{d,m}^{*}|^2 |w_m|^2 \sum_{l \neq d} \Big(\bigg|\sum_{k=1}^K \mathbf{\Phi}_{m,k} v_l^k\bigg|^2 + c_l \Big),
\end{equation}
where $c_l$ includes constant terms.
Combining all the terms together with the decomposed structure, we get the final MSE expression for each user $d$ as follows:
\begin{equation}
    \text{MSE}_d =  1 - x_1 + x_2 +x_3 +x_4 + |f_d|^2 \sigma_d^2 + c,
\end{equation}
 where $c$ contains all the constants which do not depend on $w_m$, which concludes the proof.
\end{appendices}

\ifCLASSOPTIONcaptionsoff
  \newpage
\fi

{\footnotesize
\bibliographystyle{IEEEtran}
\def\baselinestretch{0.9}
\bibliography{main}}
  
\end{document}